\newcounter{resultnum}[section]\setcounter{resultnum}{0}
\newcounter{conclusionnum}[section]\setcounter{conclusionnum}{0}
\newcounter{conditionnum}[section]\setcounter{conditionnum}{0}
\newcounter{conjecturenum}[section]\setcounter{conjecturenum}{0}
\newcounter{examplenum}[section]\setcounter{examplenum}{0}
\newcounter{exercisenum}[section]\setcounter{exercisenum}{0}
\newtheorem{lemma}{Lemma}[section]
\newcounter{lemmanum}[section]\setcounter{lemmanum}{0}
\newcounter{notationnum}[section]\setcounter{notationnum}{0}
\newtheorem{theorem}{Theorem}[section]
\newcounter{theoremnum}[section]\setcounter{theoremnum}{0}
\newtheorem{definition}{Definition}[section]
\newcounter{definitionnum}[section]\setcounter{definitionnum}{0}
\newtheorem{corollary}{Corollary}[section]
\newcounter{corollarynum}[section]\setcounter{corollarynum}{0}
\newtheorem{remark}{Remark}[section]
\newcounter{remarknum}[section]\setcounter{remarknum}{0}
\newtheorem{proposition}{Proposition}[section]
\newcounter{propositionnum}[section]\setcounter{propositionnum}{0}
\newcounter{acknowledgementnum}[section]\setcounter{acknowledgementnum}{0}
\newcounter{algorithmnum}[section]\setcounter{algorithmnum}{0}
\newcounter{axiomnum}[section]\setcounter{axiomnum}{0}
\newcounter{casenum}[section]\setcounter{casenum}{0}
\newcounter{claimnum}[section]\setcounter{claimnum}{0}
\newcounter{summarynum}[section]\setcounter{summarynum}{0}
\newcounter{problemnum}[section]\setcounter{problemnum}{0}
\newenvironment{proof}[1][]{\textbf{Proof.} }{}
\begin{document}

\title{On General Solutions of Einstein Equations}

\date{\small Submitted to arXiv.org on September 22, 2009;\newline
Published journal version in IJGMMP 8 (2011) 9-21, on March 17, 2011,\newline equivalent to arXiv: 0909.3949v1  [gr-qc];  \newline
an extended/modified variant published in IJTP 49 (2010) 884-913,\newline equivalent to arXiv: 0909.3949v4 [gr-qc]; \newline
on June 20, 2011, moderators arXiv.org accepted to provide a different number to this "short" variant in physics.gen-ph)}
\author{Sergiu I. Vacaru\thanks{
sergiu.vacaru@uaic.ro, Sergiu.Vacaru@gmail.com;\newline
http://www.scribd.com/people/view/1455460-sergiu } \\
{\quad} \\
{\small {\textsl{\ Science Department, University "Al. I. Cuza" Ia\c si},} }%
\\
{\small {\textsl{\ 54 Lascar Catargi street, 700107, Ia\c si, Romania}} }}
\maketitle

\begin{abstract}
We show how the Einstein equations with cosmological constant (and/or various types of matter field sources) can be integrated in a very general form following the anholonomic deformation method for constructing exact solutions
in four and five dimensional gravity (S. Vacaru, IJGMMP {\bf 4} (2007) 1285). In this letter, we prove that such a geometric method  can be used for constructing general non--Killing solutions. The key idea is to introduce an auxiliary linear connection which is also metric compatible and completely defined by the metric structure but contains some torsion terms induced nonholonomically by  generic off--diagonal coefficients  of metric. There are some classes of nonholonomic frames with respect to which the Einstein equations (for such an auxiliary  connection) split into an integrable  system  of partial differential equations. We have to impose  additional constraints on generating and integration functions in order to transform the auxiliary connection into the Levi--Civita one. This way, we  extract general exact solutions (parametrized by generic off--diagonal metrics and depending on all coordinates) in Einstein gravity and five dimensional extensions.

\vskip5pt

\textbf{Keywords:}\
{Anholonomic frames, exact solutions, nonholonomic manifolds, Einstein spaces.}

\vskip3pt
MSC:\ 83C15, 83E15

PACS:\ 04.20.Jb, 04.50.-h, 04.90.+e
\end{abstract}

To construct the most general classes of metrics solving the gravitational field equations in Einstein gravity and extra dimension generalizations is of considerable importance in modern gravity, cosmology and astrophysics. This is a very difficult mathematical task because of high complexity of
such systems of nonlinear partial differential equations. Various types of numerical and analytic approaches have not attempted to solve the problem in a general form but oriented to some particular types of exact or approximate solutions which seem to be of physical interest (black holes, cosmological
solutions, nonlinear gravitational waves etc). Surprisingly, there were
elaborated certain geometric methods which allows us to represent the field
equations for various types of gravitational field theories in some convenient (for further integration) forms. Following this approach,  to generate exact solutions with generic off--diagonal metrics, nonholonomic frames\footnote{the word nonholonomic, equivalently, anholonomic means that our geometric constructions will be adapted with respect to certain classes of nonholonomic/nonintegrable frames} and various types o linear connections  became a question of frame
transforms and constraining integral varieties for corresponding systems of
partial differential equations which can be solved in very general forms.

In the present paper, we prove that the Einstein equations with certain type
of general sources (in particular, with nonzero, or vanishing, cosmological
constants) can be solved following the anholonomic deformation method, see original results and reviews in Refs. \cite{ijgmmp,vncg,vsgg}.\footnote{we use anholonomic deformations of frame, metric and connection structures which makes our approach more general than the Cartan's moving frame method when the same fundamental geometric objects are equivalently re--defined with respect to certain convenient systems of reference; our idea is to solve the problem for a more general connection, also defined by the same metric structure in a unique metric compatible form, and than to constrain  the solutions to generate Levi--Civita configurations } In our
approach, we use the nonlinear connection formalism originally developed in
Finsler and Lagrange geometry but recently modified for applications in
general relativity and some 'standard' models of quantum gravity,
noncommutative Ricci flow theory and string/brane gravity models on
nonholonomic (pseudo) Riemanian and Riemann--Cartan manifolds. Such
constructions were elaborated, for instance, in Refs. \cite%
{vrflg,vspdo,vbrane} following geometric ideas originally considered for
vector and tangent bundles \cite{ma1987,ma}.

We emphasize that in this work the
metrics and connections do not depend on ''velocities'', i.e. we do not work
with geometric objects on tangent bundles, even a number of analogies with
constructions in Lagrange--Finsler geometry can be found. All results can be
stated for four dimensional, 4--d, (pseudo) Riemannian manifolds. Extensions
to 5--d Einstein manifolds, with conventional $3+2$ splitting of dimensions,
and nonholonomic reductions to 2+2, will be used only because they simplify
proofs of results and show explicitly how the anholonomic deformation method of
constructing exact solutions can be generalized.

Let us consider a (pseudo) Riemannian 5--d manifold $\ ^{5}\mathbf{V}$
endowed with a metric $\mathbf{g}=g_{\alpha \beta }(u^{\gamma })du^{\alpha
}\otimes du^{\beta }$ of arbitrary signature $\epsilon _{\alpha }=(\epsilon
_{1}=\pm 1,\ldots ,\epsilon _{5}=\pm 1).$\footnote{%
In our works, we follow conventions from \cite{ijgmmp,vrflg} when left up/low
indices are used as labels for geometric spaces and objects.} The local
coordinates on $^{5}\mathbf{V}$ are parametrized in the form $u^{\alpha
}=(x^{i},y^{a}),$ where $x^{i}=(x^{1},x^{\widehat{i}})$ and $y^{a}=\left(
v,y\right) ,$ i. e. $y^{4}=v,$ $y^{5}=y.$ Indices $i,j,k,...=1,2,3;$ $\hat{%
\imath},\hat{\jmath},\hat{k}...=2,3$ and $a,b,c,...=4,5$ are used for a
conventional $(3+2)$--splitting of dimension and general abstract/coordinate
indices when $\alpha ,\beta ,\ldots $ run values $1,2,...,5$ $.$ For 4--d
constructions, we can write $\ ^{4}\mathbf{V}$ and $u^{\widehat{\alpha }%
}=(x^{\widehat{i}},y^{a}),$ when the coordinate $x^{1}$ and values for
indices like $\alpha ,i,...=1$ are not considered. In brief, we shall denote
some partial derivatives $\partial _{\alpha }=\partial /\partial u^{\alpha }$
in the form $s^{\bullet }=\partial s/\partial x^{2},s^{\prime }=\partial
s/\partial x^{3},s^{\ast }=\partial s/\partial y^{4}.$

We write $\nabla =\{\Gamma _{\ \beta \gamma }^{\alpha }\}$ for the
Levi--Civita connection,\footnote{%
which is uniquely defined by a given tensor $\mathbf{g}$ to be metric
compatible, $\nabla \mathbf{g}=0,$ and with zero torsion;  we
  summarize on ''up-low'' repeating indices if the contrary is
not stated} with coefficients stated with respect to an arbitrary local
frame basis $e_{\alpha }=(e_{i},e_{a})$ and its dual basis $e^{\beta
}=(e^{j},e^{b}).$ Using the Riemannian curvature tensor $\mathcal{R}=\{R_{\
\beta \gamma \delta }^{\alpha }\}$ defined by $\nabla ,$ one constructs the
Ricci tensor, $\mathcal{R}ic=\{R_{\ \beta \delta }\doteqdot R_{\ \beta
\alpha \delta }^{\alpha }\},$ and scalar curvature $R\doteqdot g^{\beta
\delta }R_{\ \beta \delta },$ where $g^{\beta \delta }$ is inverse to $%
g_{\alpha \beta }.$ The Einstein equations on $\mathbf{V,}$ for an
energy--momentum source $T_{\alpha \beta },$ are written in the form%
\begin{equation}
R_{\ \beta \delta }-\frac{1}{2}g_{\beta \delta }R=\varkappa T_{\beta \delta
},  \label{einsteq}
\end{equation}%
where $\varkappa =const.$ For the Einstein spaces defined by a cosmological
constant $\lambda ,$ such gravitational field equations can be represented
as $R_{\ \ \beta }^{\alpha }=\lambda \delta _{\beta }^{\alpha },$ where $%
\delta _{\beta }^{\alpha }$ is the Kronecher symbol. The vacuum solutions
are obtained for $\lambda =0.$

The goal of this paper is to formulate and sketch the proof of (Main Result):

\begin{theorem}
\label{mth}If the gravitational field equations in Einstein gravity and its
5-d extension (\ref{einsteq}) can be represented via frame transforms in the
form
\begin{equation}
R_{\ \ \beta }^{\alpha }=\Upsilon _{\ \ \beta }^{\alpha }  \label{einst1}
\end{equation}
for a given $\Upsilon _{\ \ \beta }^{\alpha }=diag[\Upsilon _{\gamma }]$
with
\begin{equation}
\Upsilon _{1}=\Upsilon _{2}+\Upsilon _{4},\ \Upsilon _{\ \ 2}^{2}=\Upsilon
_{\ \ 3}^{3}=\Upsilon _{2}(x^{k},v),\Upsilon _{\ \ 4}^{4}=\Upsilon _{\ \
5}^{5}=\Upsilon _{4}(x^{\widehat{k}}),  \label{source}
\end{equation}%
for $\ y^{4}=v,$ such equations can be solved in general form by metrics of
type
\begin{eqnarray}
\ ^{5}\mathbf{g} &\mathbf{=}&\epsilon _{1}{dx^{1}\otimes dx^{1}}+g_{\widehat{%
i}}(x^{\widehat{k}}){dx^{\widehat{i}}\otimes dx^{\widehat{i}}}+\omega
^{2}(x^{j},y^{b})h_{a}(x^{k},v)\mathbf{e}^{a}{\otimes }\mathbf{e}^{a},
\nonumber \\
\mathbf{e}^{4} &=&dy^{4}+w_{i}(x^{k},v)dx^{i},\mathbf{e}%
^{5}=dy^{5}+n_{i}(x^{k},v)dx^{i},  \label{ansgensol}
\end{eqnarray}%
where coefficients are determined by generating functions $%
f(x^{i},v),f^{\ast }\neq 0,$ and $\omega (x^{j},y^{b})\neq 0$ and
integration functions $\ ^{0}f(x^{i}),\ ^{0}h(x^{i}),$ $\ ^{1}n_{k}(x^{i})$
and $\ ^{2}n_{k}(x^{i}),$ following formulas%
\begin{eqnarray}
g_{\widehat{i}} &=&\epsilon _{\widehat{i}}e^{\psi (x^{\widehat{k}})},%
\mbox{\
for }\epsilon _{2}\psi ^{\bullet \bullet }+\epsilon _{3}\psi ^{\prime \prime
}=\Upsilon _{4};  \label{coeff} \\
h_{4} &=&\epsilon _{4}\ ^{0}h(x^{i})\ [f^{\ast }(x^{i},v)]^{2}|\varsigma
(x^{i},v)|\mbox{\ and }h_{5}=\epsilon _{5}[f(x^{i},v)-\ ^{0}f(x^{i})]^{2};
\nonumber \\
w_{i} &=&-\partial _{i}\varsigma (x^{i},v)/\varsigma ^{\ast }(x^{i},v)%
\mbox{\ and }  \nonumber \\
n_{k} &=&\ ^{1}n_{k}(x^{i})+\ ^{2}n_{k}(x^{i})\int dv\ \varsigma
(x^{i},v)[f^{\ast }(x^{i},v)]^{2}/[f(x^{i},v)-\ ^{0}f(x^{i})]^{3},  \nonumber \\
\mbox{\ for }\varsigma &=&\ ^{0}\varsigma (x^{i})-\frac{\epsilon _{4}}{8}\
^{0}h(x^{i})\int dv\ \Upsilon _{2}(x^{k},v)\ f^{\ast }(x^{i},v)[f(x^{i},v)-\
^{0}f(x^{i})];  \nonumber \\
\mathbf{e}_{k}\omega &=&\partial _{k}\omega +w_{k}\omega ^{\ast
}+n_{k}\partial \omega /\partial y^{5}=0,  \label{confcond}
\end{eqnarray}%
when the so--called Levi--Civita integral varieties are selected by
additional constraints%
\begin{equation}
w_{i}^{\ast }=\mathbf{e}_{i}\ln |h_{4}|,\mathbf{e}_{k}w_{i}=\mathbf{e}%
_{i}w_{k},\ n_{i}^{\ast }=0,\ \partial _{i}n_{k}=\partial _{k}n_{i}.
\label{lccond}
\end{equation}
\end{theorem}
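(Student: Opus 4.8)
The plan is to exploit the nonholonomic $(3+2)$-splitting so that the field equations, once rewritten for a suitable auxiliary connection, decouple into a triangular chain that can be integrated one block at a time. First I would fix the nonlinear connection $\mathbf{N}=\{N_i^a\}$ with coefficients $N_i^4=w_i$ and $N_i^5=n_i$ read off from the off--diagonal terms in the ansatz (\ref{ansgensol}), and introduce the associated N--adapted frames $\mathbf{e}_\alpha=(\mathbf{e}_i=\partial_i-N_i^a\partial_a,\ \partial_a)$ with dual $\mathbf{e}^\alpha=(dx^i,\mathbf{e}^a)$, where $\mathbf{e}^4,\mathbf{e}^5$ are as in (\ref{ansgensol}). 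With respect to these frames the metric is block--diagonal: the coordinate term $\epsilon_1 dx^1\otimes dx^1$, a horizontal $2$--d block $g_{\widehat i}$, and a vertical block $\omega^2 h_a$.

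Next I would replace $\nabla$ by the canonical distinguished connection $\widehat{\mathbf{D}}=\{\widehat{\Gamma}_{\ \beta\gamma}^\alpha\}$ uniquely defined by $\mathbf{g}$ to be metric compatible and to preserve the h-- and v--distributions, its torsion being entirely induced nonholonomically by the $N_i^a$. The key computation is to show that the nontrivial N--adapted components of the Ricci tensor $\widehat{R}_{\ \beta}^\alpha$ for this ansatz reduce to: a two--dimensional equation $\epsilon_2\psi^{\bullet\bullet}+\epsilon_3\psi^{\prime\prime}=\Upsilon_4$ for the h--metric, matching the first line of (\ref{coeff}); a single nonlinear equation in $v$ relating $h_4$ and $h_5$ with source $\Upsilon_2$; a set of linear algebraic equations for $w_i$; and first--order equations for $n_k$. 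The diagonal form of the source together with the constraints (\ref{source}) is precisely what makes this decoupling consistent.

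The integration then proceeds by quadratures in a fixed order, and this step is essentially bookkeeping once the decoupled form is established. Solving the h--equation gives $g_{\widehat i}=\epsilon_{\widehat i}e^\psi$. Introducing the generating function $f(x^i,v)$ with $f^\ast\neq 0$, I would solve the vertical equation by setting $h_5\propto (f-\ ^0f)^2$ and $h_4\propto (f^\ast)^2|\varsigma|$, where the integrating factor $\varsigma$ is obtained from its defining $v$--integral in (\ref{coeff}); the $w_i$ then follow algebraically as $w_i=-\partial_i\varsigma/\varsigma^\ast$, and the $n_k$ by one further $v$--integration, producing the integration functions $\ ^1n_k,\ ^2n_k$.

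The main obstacle, and the conceptual heart of the theorem, is the last step: the solution just built solves the field equations for $\widehat{\mathbf{D}}$, not for $\nabla$. I would show that the distortion tensor $\widehat{\mathbf{D}}-\nabla$ vanishes along the integral variety exactly when the constraints (\ref{lccond}) hold, so that on this subvariety $\widehat{R}_{\ \beta}^\alpha=R_{\ \beta}^\alpha$ and (\ref{einst1}) becomes the genuine Einstein equation (\ref{einsteq}). The delicate point is consistency: one must verify that imposing $w_i^\ast=\mathbf{e}_i\ln|h_4|$, $\mathbf{e}_k w_i=\mathbf{e}_i w_k$, $n_i^\ast=0$, $\partial_i n_k=\partial_k n_i$ does not over--determine the already--integrated system, i.e. that enough freedom survives in the generating functions $f,\omega$ and in the integration functions to satisfy them, and likewise that the conformal factor can be chosen compatibly via the first--order condition (\ref{confcond}), namely $\mathbf{e}_k\omega=0$. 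I expect this compatibility analysis, rather than any single computation, to be where the real work lies.
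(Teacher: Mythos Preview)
Your proposal is correct and follows essentially the same architecture as the paper: N--adapted frames, the canonical d--connection $\widehat{\mathbf{D}}$, decoupling of the $\widehat{\mathbf{D}}$--Ricci equations into the triangular system (\ref{4ep1a})--(\ref{4ep4a}), integration by quadratures to obtain (\ref{coeff}), and then the zero--distortion constraints (\ref{lccondg}) which for this ansatz become (\ref{lccond}). The only organizational difference is that the paper first treats the Killing case $\omega=1$ completely and only afterwards, in Lemma~\ref{lm1}, shows that a vertical conformal factor $\omega$ with $\mathbf{e}_k\omega=0$ (together with $\widehat{T}^c_{ja}=0$) leaves $\widehat{R}^a_{\ b}$ and $\widehat{R}_{ai}$ unchanged, whereas you carry $\omega$ along from the outset; the content is the same, though the paper's ordering makes the two--dimensional conformal mechanism in the v--block more explicit than your final compatibility remark does.
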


In order to construct some explicit classes of exact solutions of Einstein
equations (\ref{einst1}), we have to state certain boundary/ symmetry/
topology conditions which would allow to define the integration functions
and systems of first order partial differential equations of type (\ref%
{lccond}). Perhaps all classes of exact solutions outlined in Refs. \cite%
{kramer,bic,ijgmmp,vncg,vsgg} can be found as certain particular cases of
metrics (\ref{ansgensol}) or equivalently redefined in such a form.

\begin{remark}
\begin{enumerate}
\item Analogs of Theorem \ref{mth} were proven in our previous works \cite%
{ijgmmp,vncg,vsgg} for certain cases with $\omega =1,$ and other types
generalizations, which allowed us to generate various classes of generic
off--diagonal\footnote{%
which can not be diagonalized by coordinate transform} exact solutions with
one Killing vector (in such a case, metrics (\ref{ansgensol}) do not depend
on variable $y^{5}).$ The key new result of this work is that we can
consider any generating function $\omega (x^{j},y^{b})$ depending on
coordinate $y^{5}$ but subjected to the condition (\ref{confcond}). This
allows us to construct very general classes of ''non--Killing'' exact
solutions.

\item It should be emphasized that any (pseudo) Riemannian metric $\mathbf{g}%
=\{g_{\alpha ^{\prime }\beta ^{\prime }}(u^{\alpha ^{\prime }})\}$ depending
in general on all five local coordinates on $\ ^{5}\mathbf{V}$ can be
parametrized in a form $g_{\alpha \beta }$ (\ref{ansgensol}), $g_{\alpha
\beta }=e_{\ \alpha }^{\alpha ^{\prime }}e_{\ \beta }^{\beta ^{\prime
}}g_{\alpha ^{\prime }\beta ^{\prime }},$ using frame transforms of type $%
e_{\alpha }=e_{\ \alpha }^{\alpha ^{\prime }}e_{\alpha ^{\prime }}.$%
\footnote{%
We have to solve certain systems of quadratic algebraic equations and define
some $e_{\ \alpha }^{\alpha ^{\prime }}(u^{\beta }),$ choosing a convenient
system of coordinates $u^{\alpha ^{\prime }}=u^{\alpha ^{\prime }}(u^{\beta
}).$} So, the metrics constructed above define general solutions of Einstein
equations for any type of sources $\varkappa T_{\beta \delta } $ which can
be parametrized in a formally diagonalized form (\ref{source}), with respect
to a nonholonomic frame of reference\footnote{%
using chains of frame transforms, such parametrizations can be defined for
'almost' all physically important energy--momentum tensors}.
\end{enumerate}
\end{remark}

Let us provide the key points for a proof of Theorem \ref{mth} following
Steps 1--6 of the anholonomic deformation/ frame method (proposed in Refs. \cite%
{vpoland,vapny,vhep2}, see recent reviews and generalizations in Refs. \cite%
{ijgmmp,vncg,vsgg,vrflg}):

\subsubsection*{Step 1: Ansatz for metrics and N--adapted frames}

We can consider a nonholonomic $(3+2)$--splitting of a spacetime $\ ^{5}%
\mathbf{V}$ by introducing a non--integrable distribution stated by certain
coefficients $\mathbf{N}=\{N_{i}^{a}\},$ when $\mathbf{N}=N_{i}^{a}(u^{%
\alpha })dx^{i}\otimes \frac{\partial }{\partial y^{a}}.$ This defines a
class of so--called N--adapted frames, (respectively) dual frames
\begin{eqnarray}
\mathbf{e}_{\nu } &=&\left( \mathbf{e}_{i}=\frac{\partial }{\partial x^{i}}%
-N_{i}^{a}\frac{\partial }{\partial y^{a}},e_{a}=\partial _{a}=\frac{%
\partial }{\partial y^{a}}\right) ,  \label{dder} \\
\mathbf{e}^{\mu } &=&\left( e^{i}=dx^{i},\mathbf{e}%
^{a}=dy^{a}+N_{i}^{a}dx^{i}\right) .  \label{ddif}
\end{eqnarray}%
The vielbeins (\ref{ddif}) satisfy the nonholonomy relations
\begin{equation}
\lbrack \mathbf{e}_{\alpha },\mathbf{e}_{\beta }]=\mathbf{e}_{\alpha }%
\mathbf{e}_{\beta }-\mathbf{e}_{\beta }\mathbf{e}_{\alpha }=w_{\alpha \beta
}^{\gamma }\mathbf{e}_{\gamma }  \label{anhrel}
\end{equation}%
with (antisymmetric) nontrivial anholonomy coefficients $w_{ia}^{b}=\partial
_{a}N_{i}^{b}$ and $w_{ji}^{a}=\Omega _{ij}^{a},$ where\footnote{%
in Lagrange--Finsler geometry, for $\ ^{4}\mathbf{V}$ $=TM,$ where $TM$ is
the total space of a tangent bundle on a manifold $M,$ such a $\mathbf{N}$
defines a nonlinear connection (N--connection) structure \cite{ma1987,ma};
nevertheless, N--connections can be considered on nonholonomic manifolds,
i.e. manifolds enabled with nonholonomic distributions, even in general
relativity, see discussions in \cite{ijgmmp,vsgg,vrflg}}
\begin{equation}
\Omega _{ij}^{a}=\mathbf{e}_{j}\left( N_{i}^{a}\right) -\mathbf{e}_{i}\left(
N_{j}^{a}\right)   \label{ncurv}
\end{equation}%
are the coefficients of N--connection curvature. The particular holonomic/
integrable case is selected by the integrability conditions $w_{\alpha \beta
}^{\gamma }=0.$\footnote{%
We use boldface symbols for spaces (and geometric objects on such spaces)
enabled with a structure of N--coefficients.}

Any (pseudo) Riemannian metric $\mathbf{g}$ on $\ ^{5}\mathbf{V}$ can be
written in the form \
\begin{equation}
\mathbf{g}=g_{ij}(u^{\alpha })e^{i}\otimes e^{j}+h_{ab}(u^{\alpha })\mathbf{e%
}^{a}\otimes \mathbf{e}^{b},  \label{gendm}
\end{equation}%
for some N--adapted coefficients $\left[ g_{ij},h_{ab}\right] $ and $%
N_{i}^{a}.$ For instance, we get the metric (\ref{ansgensol}) with $\omega
=1,$ from (\ref{gendm}) if we choose (omitting, for simplicity, priming of
indices)%
\begin{equation}
g_{ij}=diag[\epsilon _{1},g_{\widehat{i}}(x^{\widehat{k}%
})],h_{ab}=diag[h_{a}(x^{i},v)],N_{k}^{4}=w_{k}(x^{i},v),N_{k}^{5}=n_{k}(x^{i},v).
\label{data1}
\end{equation}%
Such a metric has a Killing vector, $e_{5}=\partial /\partial y^{5},$
symmetry because its coefficients do not depend on $y^{5}.$ Introducing a
nontrivial $\omega ^{2}(u^{\alpha })$ depending also on $y^{5},$ as a
multiple before $h_{a},$ we get a $(3+2)$ N--adapted parametrizaton, up to
certain frame/coordinate transforms, for all metrics on $\ ^{5}\mathbf{V}.$

\subsubsection*{Step 2: Metric compatible deformations of the Levi--Civita
connection}

It is a cumbersome task to prove using the Levi--Civita connection $\nabla $
(a unique one in general relativity being metric compatible, with zero
torsion, and completely defined by the metric structure) that the Einstein
equations (\ref{einst1}) are solved by metrics of type (\ref{ansgensol}). \
Our ''main trick'' is not only to adapt our constructions to N--adapted
frames of type $\mathbf{e}_{\alpha }$ (\ref{dder}) \ and $\mathbf{e}^{\mu }$
(\ref{ddif}) but also to use as an auxiliary tool (we emphasize, in Einstein
gravity and generalizations) a new type of linear connection $\widehat{%
\mathbf{D}}=\{\mathbf{\hat{\Gamma}}_{\ \beta \gamma }^{\alpha }\},$ also
uniquely defined by the metric structure. It can be defined as a 1--form $%
\widehat{\mathbf{\Gamma }}_{\ \beta }^{\alpha }=\widehat{\mathbf{\Gamma }}%
_{\ \beta \gamma }^{\alpha }\mathbf{e}^{\gamma }$ with coefficients $%
\widehat{\mathbf{\Gamma }}_{\ \alpha \beta }^{\gamma }=\left( \widehat{L}%
_{jk}^{i},\widehat{L}_{bk}^{a},\widehat{C}_{jc}^{i},\widehat{C}%
_{bc}^{a}\right)$ adapted to a $(3+2)$--splitting. Such a linear connection
is also metric compatible, $\widehat{\mathbf{D}}\mathbf{g=0,}$ defined by
any data $\mathbf{g=}\{g_{ij},h_{ab},N_{i}^{a}\}$ and contains an induced
torsion (by the same metric coefficients)
\begin{equation}
\widehat{\mathcal{T}}^{\alpha }=\widehat{\mathbf{T}}_{\ \beta \gamma
}^{\alpha }\mathbf{e}^{\beta }\wedge \mathbf{e}^{\gamma }\doteqdot \widehat{%
\mathbf{D}}\mathbf{e}^{\alpha }=d\mathbf{e}^{\alpha }+\widehat{\mathbf{%
\Gamma }}_{\ \beta }^{\alpha }\wedge \mathbf{e}^{\beta },  \label{tors}
\end{equation}%
with coefficients
\begin{eqnarray}
\widehat{T}_{\ jk}^{i} &=&\widehat{L}_{\ jk}^{i}-\widehat{L}_{\ kj}^{i},\
\widehat{T}_{\ ja}^{i}=-\widehat{T}_{\ aj}^{i}=\widehat{C}_{\ ja}^{i},\ T_{\
ji}^{a}=\Omega _{\ ji}^{a},\   \nonumber \\
\widehat{T}_{\ bi}^{a} &=&-\widehat{T}_{\ ib}^{a}=\frac{\partial N_{i}^{a}}{%
\partial y^{b}}-\widehat{L}_{\ bi}^{a},\ \widehat{T}_{\ bc}^{a}=\widehat{C}%
_{\ bc}^{a}-\widehat{C}_{\ cb}^{a}.  \label{dtors}
\end{eqnarray}

By straightforward computations, we shall prove that the (nonholonomically
modified) Einstein equations in 5--d gravity can be solved in general form
for the connection $\widehat{\mathbf{D}}.$ Then imposing certain constraints
when $\widehat{\mathbf{D}}\rightarrow \nabla ,$ we shall construct the most
general classes of solutions of gravitational field equations (\ref{einst1})
which can be considered also in general relativity.

\begin{definition}
A distinguished connection $\mathbf{D}$ (in brief, d--connection) on $\ ^{5}%
\mathbf{V}$ is a linear connection preserving under parallelism a
conventional horizontal and vertical splitting (in brief, h-- and
v--splitting) induced by a nonholonomic distribution $\mathbf{N}%
=\{N_{i}^{a}\}$ on tangent bundle
\begin{equation}
T\ ^{5}\mathbf{V=h}\ ^{5}\mathbf{V\oplus }\ v\ ^{5}\mathbf{V.}  \label{wh}
\end{equation}
\end{definition}

We emphasize that the Levi--Civita connection $\nabla ,$ for which $\nabla
\mathbf{g=0}$ and $\mathcal{T}^{\alpha }\doteqdot \nabla \mathbf{e}^{\alpha
}=0,$ is not a d--connection because, in general, it is not adapted to a
N--splitting defined by a Whitney sum (\ref{wh}).

\begin{theorem}
\label{th1}There is a unique canonical d--connection $\widehat{\mathbf{D}}$
satisfying the condition $\widehat{\mathbf{D}}\mathbf{g=}0$ and with
vanishing ''pure'' horizontal and vertical torsion coefficients, i. e. $%
\widehat{T}_{\ jk}^{i}=0$ and $\widehat{T}_{\ bc}^{a}=0,$ see formulas (\ref%
{dtors}).
\end{theorem}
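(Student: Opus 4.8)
The plan is to establish existence and uniqueness separately, following the standard strategy for Koszul-type formulas but adapted to the N--connection splitting. First I would set up the defining data: an arbitrary d--connection $\mathbf{D}$ on $\ ^{5}\mathbf{V}$ has, relative to the N--adapted frames $\mathbf{e}_{\alpha}$ of (\ref{dder}) and $\mathbf{e}^{\mu}$ of (\ref{ddif}), coefficients that decompose as $\mathbf{\Gamma}_{\ \alpha\beta}^{\gamma}=\left(L_{jk}^{i},L_{bk}^{a},C_{jc}^{i},C_{bc}^{a}\right)$, precisely because preservation of the Whitney sum (\ref{wh}) forces the mixed blocks (an h--covariant derivative landing in the v--subspace, or vice versa) to vanish. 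I would write out $\mathbf{D}_{\mathbf{e}_{k}}\mathbf{e}_{j}=L_{jk}^{i}\mathbf{e}_{i}$, $\mathbf{D}_{\mathbf{e}_{k}}e_{b}=L_{bk}^{a}e_{a}$, $\mathbf{D}_{e_{c}}\mathbf{e}_{j}=C_{jc}^{i}\mathbf{e}_{i}$, $\mathbf{D}_{e_{c}}e_{b}=C_{bc}^{a}e_{a}$, so that the problem reduces to solving for these four blocks subject to two collections of constraints.

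Next I would translate the two requirements into equations on the blocks. Metric compatibility $\widehat{\mathbf{D}}\mathbf{g}=0$, tested against the h-- and v--frame directions, gives separate linear systems: one involving $L_{jk}^{i}$ and the $\mathbf{e}_{k}$--derivatives of $g_{ij}$, one involving $L_{bk}^{a}$ and the $\mathbf{e}_{k}$--derivatives of $h_{ab}$, and two analogous v--derivative systems for $C_{jc}^{i}$ and $C_{bc}^{a}$. The vanishing-torsion conditions $\widehat{T}_{\ jk}^{i}=0$ and $\widehat{T}_{\ bc}^{a}=0$, read off from (\ref{dtors}), impose $\widehat{L}_{\ jk}^{i}=\widehat{L}_{\ kj}^{i}$ (symmetry of the h--block in its lower indices) and $\widehat{C}_{\ bc}^{a}=\widehat{C}_{\ cb}^{a}$ (symmetry of the v--block). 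The plan is then to combine, for the $L_{jk}^{i}$ block, the three cyclic permutations of the metric-compatibility equation in the standard Christoffel manner, using the symmetry constraint to cancel the antisymmetric part; this produces the explicit Koszul-type formula
\begin{equation}
\widehat{L}_{jk}^{i}=\tfrac{1}{2}g^{ir}\left(\mathbf{e}_{k}g_{jr}+\mathbf{e}_{j}g_{kr}-\mathbf{e}_{r}g_{jk}\right),
\end{equation}
and an entirely parallel manipulation yields
\begin{equation}
\widehat{C}_{bc}^{a}=\tfrac{1}{2}h^{ad}\left(e_{c}h_{bd}+e_{b}h_{cd}-e_{d}h_{bc}\right).
\end{equation}
For the two remaining mixed blocks $L_{bk}^{a}$ and $C_{jc}^{i}$, there is no torsion symmetry to impose, but their metric-compatibility systems are already determined outright, giving $\widehat{L}_{bk}^{a}=e_{b}(N_{k}^{a})+\tfrac{1}{2}h^{ac}\left(\mathbf{e}_{k}h_{bc}-h_{dc}\,e_{b}N_{k}^{d}-h_{db}\,e_{c}N_{k}^{d}\right)$ and $\widehat{C}_{jc}^{i}=\tfrac{1}{2}g^{ik}e_{c}g_{jk}$.

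Existence then follows because the displayed formulas are well defined (the inverse metrics $g^{ir}$ and $h^{ad}$ exist since $\mathbf{g}$ is nondegenerate), and one checks by direct substitution that these coefficients indeed satisfy both $\widehat{\mathbf{D}}\mathbf{g}=0$ and the two torsion-vanishing conditions. Uniqueness is the cleaner half: any two d--connections satisfying all the stated conditions have a difference $Z_{\ \alpha\beta}^{\gamma}$ that is again N--adapted (block-diagonal) and must be simultaneously symmetric in the relevant index pair and antisymmetric through the metric-compatibility relation $g_{\gamma\delta}Z_{\ \alpha\beta}^{\gamma}+g_{\beta\delta}Z_{\ \alpha\gamma}^{\gamma}=0$, and the usual three-term cyclic argument forces $Z=0$ on each block. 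I expect the main obstacle to be purely bookkeeping: keeping the nonholonomy carefully in view, since the frame derivatives $\mathbf{e}_{k}=\partial_{k}-N_{k}^{a}\partial_{a}$ do not commute and the anholonomy coefficients $w_{\alpha\beta}^{\gamma}$ of (\ref{anhrel}) enter the torsion formula (\ref{dtors}), so one must verify that the cyclic-permutation cancellations go through in the N--adapted (rather than coordinate) frame and that the mixed blocks $L_{bk}^{a},C_{jc}^{i}$ absorb exactly the $\Omega_{\ ji}^{a}$ and $\partial_{a}N_{i}^{b}$ pieces without reintroducing the pure-block torsions we are setting to zero.
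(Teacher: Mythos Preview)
Your proposal is correct and in fact considerably more detailed than the paper's own argument. The paper's proof is purely verificational: it simply displays the explicit N--adapted coefficients $\widehat{L}_{jk}^{i},\widehat{L}_{bk}^{a},\widehat{C}_{jc}^{i},\widehat{C}_{bc}^{a}$ (formulas (\ref{candcon})) and asserts that direct substitution yields $\widehat{\mathbf{D}}\mathbf{g}=0$ together with $\widehat{T}_{\ jk}^{i}=\widehat{T}_{\ bc}^{a}=0$, with no derivation and no separate uniqueness discussion. You instead \emph{derive} the pure blocks $\widehat{L}_{jk}^{i}$ and $\widehat{C}_{bc}^{a}$ by the Koszul cyclic-permutation trick and address uniqueness through a difference-tensor argument, which buys a self-contained proof rather than a check; your formulas agree with (\ref{candcon}) (your $\widehat{C}_{bc}^{a}$ even corrects an evident index typo in the paper's middle term). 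One small caveat: your assertion that metric compatibility alone ``determines outright'' the mixed blocks $\widehat{L}_{bk}^{a}$ and $\widehat{C}_{jc}^{i}$ is a bit quick, since e.g.\ the condition $e_{c}g_{jk}=\widehat{C}_{jc}^{m}g_{mk}+\widehat{C}_{kc}^{m}g_{mj}$ is symmetric in $(j,k)$ and so underdetermines $\widehat{C}_{jc}^{i}$; the canonical choice implicitly fixes the antisymmetric part (equivalently, requires $g_{il}\widehat{C}_{jc}^{i}$ symmetric in $(j,l)$, and analogously for $\widehat{L}_{bk}^{a}$). The paper does not spell this out either, so your argument is no less complete than the original.
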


\begin{proof}
It follows by a straightforward verification that
\begin{equation}
\widehat{D}_{j}g_{kl}=0,\widehat{D}_{a}g_{kl}=0,\widehat{D}_{j}h_{ab}=0,%
\widehat{D}_{a}h_{bc}=0,  \label{metcomp}
\end{equation}
i.e. $\widehat{\mathbf{D}}\mathbf{g=0,}$ and computing N--adapted
coefficients of torsion (\ref{dtors}), by using the N--adapted coefficients
\begin{eqnarray}
\widehat{L}_{jk}^{i} &=&\frac{1}{2}g^{ir}\left( \mathbf{e}_{k}g_{jr}+\mathbf{%
e}_{j}g_{kr}-\mathbf{e}_{r}g_{jk}\right) ,  \label{candcon} \\
\widehat{L}_{bk}^{a} &=&e_{b}(N_{k}^{a})+\frac{1}{2}h^{ac}\left( \mathbf{e}%
_{k}h_{bc}-h_{dc}\ e_{b}N_{k}^{d}-h_{db}\ e_{c}N_{k}^{d}\right) ,  \nonumber \\
\widehat{C}_{jc}^{i} &=&\frac{1}{2}g^{ik}e_{c}g_{jk},\ \widehat{C}_{bc}^{a}=%
\frac{1}{2}h^{ad}\left( e_{c}h_{bd}+e_{c}h_{cd}-e_{d}h_{bc}\right) .  \nonumber \end{eqnarray}
(End Proof.)
\end{proof}

\vskip5pt

In general, $\widehat{T}_{\ ja}^{i},\widehat{T}_{\ ji}^{a}$ and $\widehat{T}%
_{\ bi}^{a}$ are not zero, but such nontrivial components of torsion are
induced by some coefficients, depending on off--diagonal terms with $%
N_{i}^{a},$ of a general off--diagonal metric $\mathbf{g}_{\alpha \beta }$ $%
\ $written with respect to a local coordinate basis. Such a torsion $%
\widehat{\mathbf{T}}_{\ \beta \gamma }^{\alpha }$ is very different from
that, for instance, in Einstein--Cartan, string, or gauge gravity when
certain additional field equations (algebraic or dynamical ones) are
considered, see discussions in \cite{vncg,vsgg}. In our case, the nontrivial
torsion coefficients are related to anholonomy coefficients $w_{\alpha \beta
}^{\gamma }$ in (\ref{anhrel}) and none modifications of the usual Einstein
equations in general relativity will be considered.

From Theorem \ref{th1}, one follows:

\begin{corollary}
Any geometric construction for the canonical d--connection $\widehat{\mathbf{%
D}}=\{\widehat{\mathbf{\Gamma }}_{\ \alpha \beta }^{\gamma }\}$ can be
re--defined equivalently into a similar one with the Levi--Civita connection
$\nabla =\{\Gamma _{\ \alpha \beta }^{\gamma }\}$ following formulas
\begin{equation}
\ \Gamma _{\ \alpha \beta }^{\gamma }=\widehat{\mathbf{\Gamma }}_{\ \alpha
\beta }^{\gamma }+\ Z_{\ \alpha \beta }^{\gamma },  \label{deflc}
\end{equation}%
where N--adapted coefficients $\ $of connections, $\Gamma _{\ \alpha \beta
}^{\gamma }$ and $\ \widehat{\mathbf{\Gamma }}_{\ \alpha \beta }^{\gamma },$
and the distortion tensor $\ Z_{\ \alpha \beta }^{\gamma }$ are determined
in unique forms by the coefficients of a metric $\mathbf{g}_{\alpha \beta }.$
\end{corollary}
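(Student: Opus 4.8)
The plan is to exploit two structural facts: that the difference of any two linear connections on $\ ^{5}\mathbf{V}$ is a tensor field, and that each of $\nabla $ and $\widehat{\mathbf{D}}$ is already known to be uniquely fixed by the metric $\mathbf{g}$. For the first point I would set
\begin{equation*}
Z_{\ \alpha \beta }^{\gamma }\doteqdot \Gamma _{\ \alpha \beta }^{\gamma }-\widehat{\mathbf{\Gamma }}_{\ \alpha \beta }^{\gamma }
\end{equation*}
and observe that under a change of frame $\mathbf{e}_{\alpha }=e_{\ \alpha }^{\alpha ^{\prime }}\mathbf{e}_{\alpha ^{\prime }}$ both $\Gamma _{\ \alpha \beta }^{\gamma }$ and $\widehat{\mathbf{\Gamma }}_{\ \alpha \beta }^{\gamma }$ pick up the \emph{same} frame-dependent inhomogeneous (non-tensorial) term, since that term depends only on the transform $e_{\ \alpha }^{\alpha ^{\prime }}$ and not on the connection. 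The inhomogeneous parts therefore cancel in the difference, so $Z_{\ \alpha \beta }^{\gamma }$ transforms as a $(1,2)$-tensor, the distortion tensor, and rearranging its definition gives precisely the relation (\ref{deflc}).

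For the second point I would invoke uniqueness on both sides. The connection $\nabla $ is the unique metric-compatible, torsion-free connection determined by $\mathbf{g}$ by the fundamental theorem of pseudo-Riemannian geometry; its coefficients are the usual Christoffel symbols of $g_{\alpha \beta }$. By Theorem \ref{th1}, $\widehat{\mathbf{D}}$ is the unique canonical d-connection satisfying $\widehat{\mathbf{D}}\mathbf{g}=0$ with vanishing pure horizontal and vertical torsions, and its N-adapted coefficients (\ref{candcon}) are built solely from the data $\{g_{ij},h_{ab},N_{i}^{a}\}$ read off from the off-diagonal parametrization (\ref{gendm}) of $\mathbf{g}$. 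Since both connections are functionals of $\mathbf{g}$ alone, so is their difference $Z$, which establishes the clause that all three objects are determined in unique form by the coefficients of $\mathbf{g}_{\alpha \beta }$.

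It then remains to exhibit $Z$ explicitly and confirm it is N-adapted. Here I would use that both connections are metric compatible: writing the compatibility identity for $\nabla $ as $\mathbf{e}_{\gamma }g_{\alpha \beta }=\Gamma _{\ \gamma \alpha }^{\mu }g_{\mu \beta }+\Gamma _{\ \gamma \beta }^{\mu }g_{\alpha \mu }$, doing the same for $\widehat{\mathbf{D}}$, subtracting, and cyclically permuting the free indices, one solves the resulting linear system for $Z$ in terms of the N-adapted torsion $\widehat{\mathbf{T}}_{\ \beta \gamma }^{\alpha }$ of (\ref{dtors}). This presents $Z$ as the contorsion tensor, an algebraic combination of $\widehat{\mathbf{T}}$ and $\mathbf{g}$, whose only nonvanishing blocks are those generated by the components $\widehat{T}_{\ ja}^{i},\ \Omega _{\ ji}^{a},\ \widehat{T}_{\ bi}^{a}$, i.e. by the off-diagonal coefficients $N_{i}^{a}$. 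I expect the main obstacle to be purely computational rather than conceptual: keeping the horizontal/vertical index bookkeeping consistent across the $(3+2)$-splitting and verifying that every block of $Z$ closes up in terms of $\{g_{ij},h_{ab},N_{i}^{a}\}$ and their $\mathbf{e}_{\alpha }$-derivatives, so that no auxiliary non-metric structure enters and the reconstruction of $\nabla $ from $\widehat{\mathbf{D}}$ is manifestly metric-driven.
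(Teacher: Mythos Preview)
Your proposal is correct and follows essentially the same route as the paper: the paper's proof amounts to citing the standard vector-bundle derivation and then writing down the explicit N-adapted components of $Z_{\ \alpha\beta}^{\gamma}$ (formulas (\ref{deft})), and your contorsion argument---taking the difference of two metric-compatible connections and solving the cyclic system in terms of the torsion (\ref{dtors})---is precisely the computation that produces those formulas. Your conceptual framing (tensoriality of the difference, uniqueness of both connections from $\mathbf{g}$) is more explicit than what the paper states but matches the underlying argument in the cited references.
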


\begin{proof}
It is similar to that presented for vector bundles in Refs. \cite{ma1987,ma}
but in our case adapted for (pseudo) Riemannian nonholonomic manifolds, see
details in \cite{ijgmmp,vsgg,vrflg}. Here we write down the N--adapted
components of the distortion tensor $Z_{\ \alpha \beta }^{\gamma }$ computed
as
\begin{eqnarray}
\ Z_{jk}^{a} &=&-\widehat{C}_{jb}^{i}g_{ik}h^{ab}-\frac{1}{2}\Omega
_{jk}^{a},~Z_{bk}^{i}=\frac{1}{2}\Omega _{jk}^{c}h_{cb}g^{ji}-\Xi _{jk}^{ih}~%
\widehat{C}_{hb}^{j},  \nonumber \\
Z_{bk}^{a} &=&~^{+}\Xi _{cd}^{ab}~\widehat{T}_{kb}^{c},\ Z_{kb}^{i}=\frac{1}{%
2}\Omega _{jk}^{a}h_{cb}g^{ji}+\Xi _{jk}^{ih}~\widehat{C}_{hb}^{j},\
Z_{jk}^{i}=0,  \label{deft} \\
\ Z_{jb}^{a} &=&-~^{-}\Xi _{cb}^{ad}~\widehat{T}_{jd}^{c},\ Z_{bc}^{a}=0,\
Z_{ab}^{i}=-\frac{g^{ij}}{2}\left[ \widehat{T}_{ja}^{c}h_{cb}+\widehat{T}%
_{jb}^{c}h_{ca}\right] ,  \nonumber \end{eqnarray}%
for $\ \Xi _{jk}^{ih}=\frac{1}{2}(\delta _{j}^{i}\delta
_{k}^{h}-g_{jk}g^{ih}),~^{\pm }\Xi _{cd}^{ab}=\frac{1}{2}(\delta
_{c}^{a}\delta _{d}^{b}+h_{cd}h^{ab})$ and$~\widehat{T}_{\ ja}^{c}=\widehat{L%
}_{aj}^{c}-e_{a}(N_{j}^{c}).\ 
$
\end{proof}

\vskip5pt

In 4--d, the Einstein gravity can be equivalently formulated in the
so--called almost K\" ahler and Lagrange--Finsler variables, as we
considered in Refs. \cite{vrflg,vbrane,vkvnsst,vdqak}. Such types of linear
connections, like $\widehat{\mathbf{\Gamma }}_{\ \alpha \beta }^{\gamma }$
and its nonholonomic deformations, are convenient not only for elaborating
various models of brane and deformation quantization of gravity and
nonsymmetric generalizations but also in constructing general \ solutions of
the Einstein equations for the Levi--Civita connection $\nabla . $

\subsubsection*{Step 3: Nonholonomic deformations of Einstein equations}

In this and next steps, we shall work with the canonical d--connection. We
can compute the nontrivial N--adapted components of curvature of $\widehat{%
\mathbf{D}},$ following formulas
\begin{equation}
\widehat{\mathcal{R}}_{~\beta }^{\alpha }\doteqdot \widehat{\mathbf{D}}%
\widehat{\mathbf{\Gamma }}_{\ \beta }^{\alpha }=d\widehat{\mathbf{\Gamma }}%
_{\ \beta }^{\alpha }-\widehat{\mathbf{\Gamma }}_{\ \beta }^{\gamma }\wedge
\widehat{\mathbf{\Gamma }}_{\ \gamma }^{\alpha }=\widehat{\mathbf{R}}_{\
\beta \gamma \delta }^{\alpha }\mathbf{e}^{\gamma }\wedge \mathbf{e}^{\delta
},  \label{curv}
\end{equation}%
The explicit formulas for the so--called N--adapted coefficients of
curvature $\widehat{\mathbf{\mathbf{R}}}\mathbf{_{\ \beta \gamma \delta
}^{\alpha }=}\{\widehat{R}_{\ hjk}^{i},\widehat{R}_{\ bjk}^{a},\widehat{R}%
_{\ jka}^{i},\widehat{R}_{\ bka}^{c},\widehat{R}_{\ jbc}^{i},\widehat{R}_{\
bcd}^{a}\}$\textbf{\ }of (pseudo) Riemannian spaces are provided, for
instance, in Refs. \cite{ijgmmp,vsgg,vrflg}.

Contracting respectively the N--adapted coefficients of $\widehat{\mathbf{R}}%
_{\ \beta \gamma \delta }^{\alpha }$ (\ref{curv}), one proves that the Ricci
tensor $\widehat{\mathbf{R}}_{\alpha \beta }\doteqdot \widehat{\mathbf{R}}%
_{\ \alpha \beta \tau }^{\tau }$ is characterized by h- v--components, i.e.
the Ricci tensor $\widehat{\mathbf{R}}_{\alpha \beta }=\{\widehat{R}_{ij},%
\widehat{R}_{ia},\ \widehat{R}_{ai},\ \widehat{R}_{ab}\},$
\begin{equation}
\widehat{R}_{ij}\doteqdot \widehat{R}_{\ ijk}^{k},\ \ \widehat{R}%
_{ia}\doteqdot -\widehat{R}_{\ ika}^{k},\ \widehat{R}_{ai}\doteqdot \widehat{%
R}_{\ aib}^{b},\ \widehat{R}_{ab}\doteqdot \widehat{R}_{\ abc}^{c}.
\label{dricci}
\end{equation}%
The scalar curvature of $\widehat{\mathbf{D}}$ is defined
\begin{equation}
\ ^{s}\widehat{R}\doteqdot \mathbf{g}^{\alpha \beta }\widehat{\mathbf{R}}%
_{\alpha \beta }=g^{ij}\widehat{R}_{ij}+h^{ab}\widehat{R}_{ab}.
\label{sdccurv}
\end{equation}

The Einstein tensor of $\widehat{\mathbf{D}}$ is (by definition)
\begin{equation}
\widehat{\mathbf{E}}_{\alpha \beta }=\widehat{\mathbf{R}}_{\alpha \beta }-%
\frac{1}{2}\mathbf{g}_{\alpha \beta }\ ^{s}\widehat{R}.  \label{enstdt}
\end{equation}%
Here, one should be emphasized that tensors $\widehat{\mathbf{\mathbf{R}}}%
\mathbf{_{\ \beta \gamma \delta }^{\alpha },}$ $\widehat{\mathbf{R}}_{\alpha
\beta }$ and $\widehat{\mathbf{E}}_{\alpha \beta }$ (being constructed for
the connection $\widehat{\mathbf{D}}\neq \nabla $) defer by corresponding
distortion tensors from similar tensors $R_{\ \beta \gamma \delta }^{\alpha }%
\mathbf{,}$ $R_{\alpha \beta }$ and $E_{\alpha \beta },$ derived for $\nabla
,$ even both classes of such tensors are completely defined by a same metric
structure $\mathbf{g}_{\alpha \beta }.$ So, the nonholonomically modified
gravitational field equations
\begin{equation}
\widehat{\mathbf{E}}_{\alpha \beta }=\varkappa T_{\beta \delta }
\label{deinst}
\end{equation}%
are not equivalent, in general, to usual Einstein equations for the
Levi--Civita connection $\nabla $ (\ref{einsteq}).\footnote{%
In our previous works \cite{vrflg,vbrane,vkvnsst,vdqak}, we noted that an
equivalence of both types of filed equations would be possible, for
instance, if we introduce a generalized source $\widehat{\mathbf{T}}_{\beta
\delta }$ containing contributions of the distortion tensor (\ref{deft}).}

Nevertheless, it is convenient to use a variant of equations (\ref{deinst}),
\begin{equation}
\widehat{\mathbf{R}}_{\ \ \beta }^{\alpha }=\Upsilon _{\ \ \beta }^{\alpha },
\label{deinst1}
\end{equation}%
with a general source parametrize in the form (\ref{source}), $\Upsilon _{\
\ \beta }^{\alpha }=diag[\Upsilon _{\gamma }],$ because such equations can
be integrated in general form and, for instance, play an important role in
Finsler--Lagrange theories of gravity derived in low energy limits of
string/brane gravity and noncommutative generalizations \cite{vncg}. In Step
5, see below, we shall impose additional constraints on coefficients of
solutions for $\widehat{\mathbf{D}} $ when $\widehat{\mathbf{\Gamma }}_{\
\alpha \beta }^{\gamma }$ will be the same as $\Gamma _{\ \alpha \beta
}^{\gamma },$ with respect to a chosen N--adapted frame (even, in general, $%
\widehat{\mathbf{D}}\neq \nabla ).$\footnote{%
This is possible because the laws of transforms for d--connections, for the
Levi--Civita connection and different types of tensors being adapted, or
not, to a N--splitting (\ref{wh}) are very different.} As a result, we shall
select classes of solutions for equations (\ref{einst1}) with the Ricci
tensor $R_{\alpha \beta }.$

\begin{theorem}
\label{th2}The system of gravitational field equations (\ref{deinst1})
constructed for $\widehat{\mathbf{\Gamma }}_{\ \alpha \beta }^{\gamma }$
with coefficients (\ref{candcon}) and computed for a metric (\ref{gendm})
with coefficients (\ref{data1}), when $g_{\alpha \beta }=diag[\epsilon
_{1},g_{\widehat{i}}(x^{\widehat{k}}),h_{a}(x^{i},v)]$ and $%
N_{k}^{4}=w_{k}(x^{i},v),N_{k}^{5}=n_{k}(x^{i},v),$ is equivalent to this
system of partial differential equations:
\begin{eqnarray}
\widehat{R}_{2}^{2} &=&\widehat{R}_{3}^{3}
 =\frac{1}{2g_{2}g_{3}}[\frac{g_{2}^{\bullet }g_{3}^{\bullet }}{2g_{2}}+%
\frac{(g_{3}^{\bullet })^{2}}{2g_{3}}-g_{3}^{\bullet \bullet }+\frac{%
g_{2}^{^{\prime }}g_{3}^{^{\prime }}}{2g_{3}}+\frac{(g_{2}^{^{\prime }})^{2}%
}{2g_{2}}-g_{2}^{^{\prime \prime }}]=-\Upsilon _{4}(x^{\widehat{i}}),
\label{4ep1a} \\
\widehat{R}_{4}^{4} &=&\widehat{R}_{5}^{5}=\frac{h_{5}^{\ast }}{2h_{4}h_{5}}%
\left( \ln \left| \frac{\sqrt{|h_{4}h_{5}|}}{h_{5}^{\ast }}\right| \right)
^{\ast }=\ -\Upsilon _{2}(x^{i},v),  \label{4ep2a} \\
\widehat{R}_{4i} &=&-w_{i}\frac{\beta }{2h_{4}}-\frac{\alpha _{i}}{2h_{4}}=0,
\label{4ep3a} \\
\widehat{R}_{5i} &=&-\frac{h_{5}}{2h_{4}}\left[ n_{i}^{\ast \ast }+\gamma
n_{i}^{\ast }\right] =0,  \label{4ep4a}
\end{eqnarray}%
where, for $h_{4,5}^{\ast }\neq 0,$\footnote{%
solutions with $h_{4}^{\ast }=0,$ or $h_{5}^{\ast }=0,$ should be analyzed
as some special cases (for simplicity, we omit such considerations in this
work)}%
\begin{equation}
~\phi =\ln |\frac{h_{5}^{\ast }}{\sqrt{|h_{4}h_{5}|}}|,\ \alpha
_{i}=h_{5}^{\ast }\partial _{i}\phi ,\ \beta =h_{4}^{\ast }\ \phi ^{\ast },\
\gamma =\left( \ln |h_{5}|^{3/2}/|h_{4}|\right) ^{\ast }.  \label{auxphi}
\end{equation}
\end{theorem}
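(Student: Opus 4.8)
The statement is an equivalence between the $N$-adapted field equations $\widehat{\mathbf{R}}_{\ \beta}^{\alpha}=\Upsilon_{\ \beta}^{\alpha}$ and the explicit PDE system, so the strategy is a direct $N$-adapted computation: start from the canonical d--connection coefficients (\ref{candcon}) and the curvature $2$--form (\ref{curv}), specialize to the ansatz (\ref{data1}), and exploit the coordinate independences to discard almost all terms. The crucial structural facts are that $g_{ij}$ depends only on the base coordinates (indeed $g_{11}=\epsilon_{1}=const$ and $g_{\widehat{i}}=g_{\widehat{i}}(x^{\widehat{k}})$), while $h_{ab}$ and $N_{k}^{a}=(w_{k},n_{k})$ depend only on $(x^{i},v)$, and \emph{nothing} depends on $y^{5}$. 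First I would simplify the connection itself. Since $e_{c}g_{jk}=\partial_{c}g_{jk}=0$, formula (\ref{candcon}) gives $\widehat{C}_{jc}^{i}=0$; and since $\mathbf{e}_{k}g_{jr}=\partial_{k}g_{jr}$, the piece $\widehat{L}_{jk}^{i}$ collapses to the ordinary Christoffel symbols of the $3$--metric $g_{ij}(x)$. The only genuinely nontrivial blocks that remain are $\widehat{L}_{bk}^{a}$ and $\widehat{C}_{bc}^{a}$, both built from $h_{ab}$ and $N_{k}^{a}$.

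With the connection reduced, I would treat the four Ricci blocks of (\ref{dricci}) separately. For $\widehat{R}_{ij}=\widehat{R}_{\ ijk}^{k}$, because the horizontal connection is just the Levi--Civita connection of $g_{ij}(x)$ and $\widehat{C}_{jc}^{i}=0$, the relevant curvature is the ordinary Riemann tensor of $diag[\epsilon_{1},g_{2},g_{3}]$. The flat $x^{1}$ direction decouples, so only the two--dimensional $(x^{2},x^{3})$ block survives; contracting it produces the Gaussian--curvature combination of $g_{2},g_{3}$ and their $\bullet$-- and $\prime$--derivatives, giving exactly (\ref{4ep1a}), with the off--diagonal $\widehat{R}_{23}$ vanishing automatically by the diagonal structure. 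For $\widehat{R}_{ab}=\widehat{R}_{\ abc}^{c}$ I would compute the vertical curvature from $\widehat{C}_{bc}^{a}$; with diagonal $h_{ab}=diag[h_{4},h_{5}](x^{i},v)$ the $v$--derivatives (the $\ast$ operation) reorganize into the single combination that defines $\phi$ in (\ref{auxphi}), yielding (\ref{4ep2a}) and $\widehat{R}_{45}=0$.

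The mixed sector is where the real work lies. The components $\widehat{R}_{4i},\widehat{R}_{5i}$ come from the horizontal--vertical curvature, and here $w_{i}=N_{i}^{4}$, $n_{i}=N_{i}^{5}$ enter through $\widehat{L}_{bk}^{a}$ and through the torsion-type objects $\widehat{T}_{\ ja}^{c}=\widehat{L}_{aj}^{c}-e_{a}(N_{j}^{c})$ already recorded after (\ref{deft}). Collecting the surviving $v$--derivatives and abbreviating them by $\alpha_{i}=h_{5}^{\ast}\partial_{i}\phi$, $\beta=h_{4}^{\ast}\phi^{\ast}$ and $\gamma=(\ln|h_{5}|^{3/2}/|h_{4}|)^{\ast}$, the $\widehat{R}_{4i}=0$ condition turns into a purely algebraic relation for $w_{i}$, namely (\ref{4ep3a}), while $\widehat{R}_{5i}=0$ becomes the linear second--order ODE in $v$ for $n_{i}$ of (\ref{4ep4a}). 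To close the equivalence I would finally check that the transposed components $\widehat{R}_{i4},\widehat{R}_{i5}$ impose no independent constraints and that all remaining off--diagonal Ricci entries vanish for this ansatz, so that $\widehat{\mathbf{R}}_{\ \beta}^{\alpha}=\Upsilon_{\ \beta}^{\alpha}$ with diagonal $\Upsilon_{\ \beta}^{\alpha}$ is \emph{exactly} the four listed equations.

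The main obstacle is bookkeeping in this mixed sector: correctly identifying which of the many $\widehat{L}_{bk}^{a}$-- and $\widehat{T}_{\ ja}^{c}$--terms survive under the coordinate restrictions, and assembling them without sign or factor errors into the compact quantities $\phi,\alpha_{i},\beta,\gamma$. The horizontal block is a routine conformal two--dimensional curvature computation and the vertical block is essentially a one--variable ($v$) calculation, so those carry little risk; it is precisely the reduction of $\widehat{R}_{4i}$ to an algebraic equation in $w_{i}$ and of $\widehat{R}_{5i}$ to the stated ODE in $n_{i}$ that must be carried out with care, and it is there that the usefulness of working with $\widehat{\mathbf{D}}$ rather than $\nabla$ becomes decisive, since the extra torsion terms are exactly what makes the system split and integrate.
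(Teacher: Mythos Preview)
Your proposal is correct and follows the same approach as the paper, which simply records that the result ``can be obtained by straightforward computations'' and cites the monograph \cite{vsgg} and Refs.~\cite{ijgmmp,vncg} for details. Your plan in fact supplies considerably more structure than the paper's one--line proof: the separation into horizontal, vertical, and mixed Ricci blocks, the observation that $\widehat{C}^{i}_{jc}=0$ forces $\widehat{L}^{i}_{jk}$ to coincide with the ordinary Christoffel symbols of the $3$--metric, and the identification of the mixed sector as the place where the abbreviations $\phi,\alpha_{i},\beta,\gamma$ emerge, are exactly the organizing ideas behind the computation carried out in those references.
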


\begin{proof}
It can be obtained by straightforward computations as in Parts I and II of
monograph \cite{vsgg}, see also some important details and discussions in Refs. \cite{ijgmmp,vncg}.
\end{proof}

\subsubsection*{Step 4: Solutions with Killing symmetry for nonholonomic
gravitational fields}

The system of equations used in Theorem \ref{th2} can be integrated in very
general forms  for any given $\Upsilon _{2}$ and $%
\Upsilon _{4}.$ Here we note that the equation (\ref{4ep1a})
relates an un--known function $%
g_{2}(x^{2},x^{3})$ to a prescribed $g_{3}(x^{2},x^{3}),$ or inversely. The
equation (\ref{4ep2a}) contains only derivatives on $y^{4}=v$ and allows us
to define $h_{4}(x^{i},v)$ for a given $h_{5}(x^{i},v),$ or inversely, for $%
h_{4,5}^{\ast }\neq 0;$ having defined $h_{4}$ and $h_{5},$ we can compute
the coefficients (\ref{auxphi}), which allows us to find $w_{i}$ from
algebraic equations (\ref{4ep3a}) and to compute $n_{i}$ by integrating
two times on $v$ as follow from equations (\ref{4ep4a}). This way, we prove:

\begin{proposition}
\label{pr1}The general class of solutions of nonholonomic gravitational
equations (\ref{deinst1}) with one Killing symmetry on $e_{5}=\partial
/\partial y^{5}$ is defined by an ansatz (\ref{ansgensol}) with $\omega
^{2}=1$ and coefficients $g_{\widehat{i}},h_{a},$ $w_{k},n_{k}$ computed
following formulas (\ref{coeff}).
\end{proposition}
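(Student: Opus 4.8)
The plan is to integrate the PDE system of Theorem~\ref{th2} step by step, exploiting the triangular structure that the canonical d--connection imposes. First I would treat the purely horizontal equation (\ref{4ep1a}), which by the $(3+2)$--splitting decouples from the vertical sector. With the exponential parametrization $g_{\widehat{i}}=\epsilon_{\widehat{i}}e^{\psi(x^{\widehat{k}})}$, a direct substitution collapses the right--hand side of (\ref{4ep1a}) into the linear two--dimensional equation $\epsilon_2\psi^{\bullet\bullet}+\epsilon_3\psi^{\prime\prime}=\Upsilon_4$, so that the $h$--metric is fixed by a single generating function $\psi$ satisfying a (possibly nonlinear, but here linearized) Poisson--type equation with source $\Upsilon_4(x^{\widehat{k}})$.

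Next I would attack the vertical equation (\ref{4ep2a}), whose key feature is that it involves only $v=y^4$--derivatives, with $x^i$ entering parametrically. The strategy is to introduce a generating function $f(x^i,v)$ with $f^{\ast}\neq 0$ and set $h_5=\epsilon_5[f-\,^{0}f]^2$; substituting $\phi=\ln|h_5^{\ast}/\sqrt{|h_4h_5|}|$ from (\ref{auxphi}) into (\ref{4ep2a}) should reduce the second--order equation to a first--order relation for $\varsigma$, integrable as a single integral over $v$. This yields $h_4=\epsilon_4\,^{0}h\,[f^{\ast}]^2|\varsigma|$ with $\varsigma$ given by the quadrature in (\ref{coeff}). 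Once $h_4,h_5$ are known, the auxiliary functions $\alpha_i,\beta,\gamma$ in (\ref{auxphi}) are determined, so the $w_i$ follow from the \emph{algebraic} equations (\ref{4ep3a}) as $w_i=-\partial_i\varsigma/\varsigma^{\ast}$, and the $n_i$ follow from (\ref{4ep4a}) by integrating twice in $v$, producing the two integration functions $^{1}n_k,\,^{2}n_k$ together with the displayed integral formula.

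The main obstacle, and the place where care is required, is verifying that these explicit quadratures actually solve the coupled system rather than merely a formally decoupled version of it; in particular one must check the consistency of the substitution $\phi\mapsto h_4,h_5$ (that $h_5^{\ast}\neq 0$ is preserved and that the expression under the logarithm keeps a definite sign), and that the algebraic solvability of (\ref{4ep3a}) for $w_i$ is not obstructed by a vanishing $\beta$. A secondary subtlety is confirming that the integration ``constants'' are genuinely free functions of $x^i$, which is what upgrades this from a one--parameter family to the general Killing solution. I would dispatch these by the straightforward back--substitution already referenced in the proof of Theorem~\ref{th2} (as in \cite{vsgg}), and conclude by assembling $g_{\widehat{i}},h_a,w_k,n_k$ into the ansatz (\ref{ansgensol}) with $\omega^2=1$, which is exactly the asserted form, thereby establishing the proposition.
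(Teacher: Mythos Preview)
Your proposal is correct and follows essentially the same route as the paper: exploit the triangular decoupling of the system in Theorem~\ref{th2}, first solving (\ref{4ep1a}) for the $h$--metric via the exponential substitution, then integrating (\ref{4ep2a}) in $v$ using the generating function $f$ to obtain $h_4,h_5$, then reading off $w_i$ algebraically from (\ref{4ep3a}) and integrating (\ref{4ep4a}) twice in $v$ for $n_i$. In fact you supply considerably more detail than the paper, which reduces the argument to a one--paragraph sketch and defers the explicit back--substitution to \cite{vsgg,ijgmmp,vncg}; your attention to the nondegeneracy conditions ($h_5^{\ast}\neq 0$, $\beta\neq 0$) and to the status of the integration ``constants'' as free functions of $x^i$ goes beyond what the paper states explicitly.
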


We note that such classes of solutions are very general ones and contain as
particular cases all possible exact solutions for (non) holonomic Einstein
spaces with Killing symmetry. They also can be generalized to include
arbitrary finite sets of parameters as we considered in Ref. \cite{ijgmmp}.

\subsubsection*{Step 5: Constraints generating solutions in Einstein gravity}

Nevertheless, the solutions constructed following Proposition \ref{pr1} are
for the canonical d--connection, $\widehat{\mathbf{D}},$ and not for the
Levi--Civita one, $\nabla .$ We can see that both the torsion $\widehat{%
\mathbf{T}}_{\ \beta \gamma }^{\alpha }$ (\ref{dtors}) and distortion tensor
$Z_{\ \alpha \beta }^{\gamma }$ (\ref{deft}) became zero if and only if
\begin{equation}
\widehat{C}_{jb}^{i}=0,\Omega _{\ ji}^{a}=0,\widehat{T}_{ja}^{c}=0,
\label{lccondg}
\end{equation}%
with respect to a N--adapted basis (in general, such a basis is anholonomic
because $w_{ia}^{b}=\partial _{a}N_{i}^{b}$ is not obligatory zero, see
formulas (\ref{anhrel})). In such a case, the distortion relations (\ref%
{deflc}) transform into $\ \Gamma _{\ \alpha \beta }^{\gamma }=\widehat{%
\mathbf{\Gamma }}_{\ \alpha \beta }^{\gamma }.$

\begin{corollary}
\label{cor1} An ansatz (\ref{ansgensol}) with $\omega ^{2}=1$ and
coefficients $g_{\widehat{i}},h_{a},$ $w_{k},n_{k}$ computed following
formulas (\ref{coeff}) defines solutions with one Killing symmetry on $%
e_{5}=\partial /\partial y^{5}$ of the Einstein equations (\ref{einst1}) for
the Levi--Civita connection $\Gamma _{\ \alpha \beta }^{\gamma },$ all
formulas being considered with respect to N--adapted frames, if the
coefficients of metric are subjected additionally to the conditions (\ref%
{lccond}).
\end{corollary}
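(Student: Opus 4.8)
The plan is to recall that Proposition~\ref{pr1} already establishes that the ansatz (\ref{ansgensol}) with $\omega^2=1$ and coefficients computed via (\ref{coeff}) solves the nonholonomically modified equations (\ref{deinst1}) for the canonical d--connection $\widehat{\mathbf{D}}$. The remaining task for this corollary is therefore purely to identify when such a solution descends to a genuine solution of the Einstein equations (\ref{einst1}) for the Levi--Civita connection $\nabla$. The decisive tool is the distortion relation (\ref{deflc}), $\Gamma_{\ \alpha\beta}^{\gamma}=\widehat{\mathbf{\Gamma}}_{\ \alpha\beta}^{\gamma}+Z_{\ \alpha\beta}^{\gamma}$, which shows that the two Ricci tensors coincide precisely when the distortion tensor vanishes. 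So the strategy is to pin down algebraic-differential conditions on the ansatz forcing $Z_{\ \alpha\beta}^{\gamma}=0$, and then verify that these are exactly the constraints (\ref{lccond}).

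First I would invoke the observation made just before the corollary: inspecting the explicit components (\ref{deft}), every term of $Z_{\ \alpha\beta}^{\gamma}$ is built from the three quantities $\widehat{C}_{jb}^{i}$, $\Omega_{\ ji}^{a}$, and $\widehat{T}_{ja}^{c}$, so both the induced torsion (\ref{dtors}) and the distortion vanish if and only if the conditions (\ref{lccondg}) hold, $\widehat{C}_{jb}^{i}=0$, $\Omega_{\ ji}^{a}=0$, $\widehat{T}_{ja}^{c}=0$. In that case (\ref{deflc}) collapses to $\Gamma_{\ \alpha\beta}^{\gamma}=\widehat{\mathbf{\Gamma}}_{\ \alpha\beta}^{\gamma}$, whence $R_{\ \beta}^{\alpha}=\widehat{R}_{\ \beta}^{\alpha}$ with respect to the N--adapted frame, and the solution of (\ref{deinst1}) is automatically a solution of (\ref{einst1}). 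Thus the heart of the proof is to translate the abstract vanishing conditions (\ref{lccondg}) into concrete statements about the ansatz functions $h_a$, $w_k$, $n_k$.

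The key step is this translation. Using the canonical coefficients (\ref{candcon}) specialized to the diagonal $h_{ab}=\mathrm{diag}[h_a(x^i,v)]$ and the off--diagonal data $N_k^4=w_k$, $N_k^5=n_k$, I would compute each of the three families of constraints in turn. The condition $\widehat{C}_{jb}^{i}=0$ follows because $\widehat{C}_{jc}^{i}=\tfrac12 g^{ik}e_c g_{jk}$ and the h--metric $g_{\widehat i}=g_{\widehat i}(x^{\widehat k})$ does not depend on the fiber coordinates $y^a$, so $e_c g_{jk}=0$ identically for this ansatz; this piece is free. The substantive constraints come from $\Omega_{\ ji}^{a}=0$ and $\widehat{T}_{ja}^{c}=0$. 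Expanding the N--connection curvature (\ref{ncurv}) for $a=4,5$ and imposing $\Omega_{\ ji}^{a}=0$ yields the symmetry conditions $\mathbf{e}_k w_i=\mathbf{e}_i w_k$ and $\partial_i n_k=\partial_k n_i$; expanding $\widehat{T}_{ja}^{c}=\widehat{L}_{aj}^{c}-e_a(N_j^c)$ and setting it to zero yields the compatibility relations $w_i^{\ast}=\mathbf{e}_i\ln|h_4|$ and $n_i^{\ast}=0$. Collecting these reproduces exactly (\ref{lccond}), completing the argument.

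The main obstacle I anticipate is the bookkeeping in the second family, namely verifying that $\widehat{T}_{ja}^{c}=0$ decouples cleanly into the stated first-order system. Because $\widehat{L}_{bk}^{a}$ in (\ref{candcon}) mixes $e_b(N_k^a)$ with metric-derivative terms $h^{ac}(\mathbf{e}_k h_{bc}-\cdots)$, one must check that for the diagonal fiber metric these mixed terms organize precisely into the logarithmic derivative $\mathbf{e}_i\ln|h_4|$ for the $w_i$ sector and force $n_i^{\ast}=0$ for the $n_i$ sector, rather than producing extra cross-constraints that would overdetermine the ansatz. The delicate point is confirming that these differential constraints are \emph{consistent} with, and not contradictory to, the defining formulas (\ref{coeff}) already fixed in Proposition~\ref{pr1}; I would argue this holds because (\ref{lccond}) constrains only the still-free integration functions ($^0f$, $^0h$, $^1n_k$, $^2n_k$) and the generating function $f$, leaving a nonempty integral variety of Levi--Civita solutions. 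Everything else is routine verification against (\ref{dtors}) and (\ref{deft}).
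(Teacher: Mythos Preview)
Your proposal is correct and follows essentially the same route as the paper's own proof, which simply states that ``by straightforward computations for ansatz defined by metrics (\ref{gendm}) with coefficients (\ref{data1}), we get that the conditions (\ref{lccondg}) resulting in $\Gamma_{\ \alpha\beta}^{\gamma}=\widehat{\mathbf{\Gamma}}_{\ \alpha\beta}^{\gamma}$ are just those written as (\ref{lccond}).'' You have merely expanded what the paper compresses into one sentence: invoking Proposition~\ref{pr1} for the $\widehat{\mathbf D}$-solution, reducing the passage to $\nabla$ via the vanishing of the distortion (\ref{deft}) through (\ref{lccondg}), and then specializing (\ref{lccondg}) to the ansatz data (\ref{data1}) to recover (\ref{lccond}). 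One small caution in your bookkeeping: the condition $\partial_i n_k=\partial_k n_i$ from $\Omega_{ij}^{5}=0$ only takes that clean partial-derivative form \emph{after} $n_i^{\ast}=0$ has been imposed (since $\mathbf{e}_j n_i=\partial_j n_i - w_j n_i^{\ast}$), so the two families of constraints are coupled rather than independently attributable; but the resulting system is exactly (\ref{lccond}), as you claim.
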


\begin{proof}
By straightforward computations for ansatz defined by metrics (\ref{gendm})
with coefficients (\ref{data1}), we get that the conditions (\ref{lccondg})
resulting in $\Gamma _{\ \alpha \beta }^{\gamma }=\widehat{\mathbf{\Gamma }}%
_{\ \alpha \beta }^{\gamma }$ are just those written as (\ref{lccond}).\
\end{proof}

\vskip2pt

Steps 1--5 considered above result in:
\vskip1pt
{\bf Conclusion:\ }
In order to generate exact solutions with Killing symmetry in Einstein
gravity and its 5--d extensions, we should consider N--adapted frames and
nonholonomic deformations of the Levi--Civita connection to an auxiliary
metric compatible d--connection (for instance, to the canonical
d--connection, $\widehat{\mathbf{D}}),$ when the corresponding system of
nonholonomic gravitational field equations (\ref{4ep1a})--(\ref{4ep4a}) can
be integrated in general form. Subjecting the integral variety of such
solutions to additional constraints of type (\ref{lccondg}), i.e. imposing
the conditions (\ref{lccond}) to the coefficients of metrics, we may
construct new classes of exact solutions of Einstein equations for the
Levi--Civita connection $\nabla .$

\subsubsection*{Step 6: General solutions in Einstein gravity}

The last step which allows us to consider the most general classes of
solutions of the nonholonomic gravitational field equations (\ref{deinst1}),
and (for more particular cases), of Einstein equations (\ref{einst1}), is to
extend the anholonomic deformation  method to the case of metrics depending on all
coordinates $u^{\alpha }=(x^{i},y^{a}),$ i.e. to solutions without any
prescribed Killing symmetry.

Let us introduce a nontrivial multiple $\omega ^{2}(x^{i},y^{a})$ before
coefficients $h_{a}$ in a metric (\ref{gendm}), when the rest of
coefficients are parametrized in the form (\ref{data1}). We get an ansatz of
type%
\begin{eqnarray}
^{\omega }\mathbf{g} &=&\epsilon _{1}e^{1}\otimes e^{1}+g_{\widehat{j}}(x^{%
\widehat{k}})e^{\widehat{j}}\otimes e^{\widehat{j}}+\omega
^{2}(x^{i},y^{a})h_{a}(x^{i},v)\mathbf{e}^{a}\otimes \mathbf{e}^{a},  \nonumber \\
\mathbf{e}^{4} &=&dy^{4}+w_{i}(x^{k},v)dx^{i},\mathbf{e}%
^{5}=dy^{5}+n_{i}(x^{k},v)dx^{i}.  \label{genansc}
\end{eqnarray}%
Introducing coefficients of $^{\omega }\mathbf{g}$ into formulas (\ref%
{candcon}), we compute $\ ^{\omega }\widehat{\mathbf{\Gamma }}_{\ \alpha
\beta }^{\gamma },$ which allows us to define, see (\ref{dricci}), $\
^{\omega }\widehat{\mathbf{R}}_{\alpha \beta }=\{\widehat{R}_{ij},\widehat{R}%
_{ia},\ \ ^{\omega }\widehat{R}_{ai},\ \ ^{\omega }\widehat{R}_{ab}\}.$

\begin{lemma}
\label{lm1}For a generalized ansatz (\ref{genansc}), which for $\omega
^{2}=1 $ is a solution of \ nonholonomic gravitational equations (\ref%
{deinst1}) with Killing symmetry on $e_{5}=\partial /\partial y,$ we obtain
\begin{equation}
\ ^{\omega }\widehat{R}_{\ b}^{a}=\ \widehat{R}_{\ b}^{a}+\ ^{\omega }%
\widehat{Z}_{\ b}^{a}\mbox{\ and }\ \ ^{\omega }\widehat{R}_{ai}=\ \
\widehat{R}_{ai}=0,  \label{eqconf}
\end{equation}%
with $\ ^{\omega }\widehat{Z}_{\ b}^{a}=diag[\ ^{\omega }\widehat{Z}%
_{c}(x^{i},y^{a})]$ $\ $determined for any $\omega ^{2}(x^{i},y^{a})$
subjected to conditions $\mathbf{e}_{k}\omega =0$ (\ref{confcond}) and $%
\widehat{T}_{ja}^{c}=0$.
\end{lemma}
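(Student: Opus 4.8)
The plan is to treat the passage from $\omega^2=1$ to a general $\omega^2(x^i,y^a)$ as a pure rescaling of the vertical block of the metric, $h_{ab}\mapsto\omega^2 h_{ab}$, while the horizontal coefficients $g_{ij}$ and the N--connection coefficients $N_i^a=(w_i,n_i)$ in (\ref{data1}) stay fixed. First I would substitute this rescaled metric into the formulas (\ref{candcon}) for the canonical d--connection and record, term by term, which coefficients are affected. Since $\widehat{L}_{jk}^i$ depends only on $g_{ij}$ it is unchanged, and $\widehat{C}_{jc}^i=\tfrac12 g^{ik}e_c g_{jk}$ vanishes identically here because $g_{jk}$ is $y$--independent, so the condition $\widehat{C}_{jb}^i=0$ of (\ref{lccondg}) holds automatically.

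For the mixed coefficient $\widehat{L}_{bk}^a$ the rescaling produces exactly one extra contribution, $\tfrac12 h^{ac}h_{bc}\,\omega^{-2}\mathbf{e}_k(\omega^2)=\delta_b^a\,\omega^{-1}\mathbf{e}_k\omega$, which is annihilated precisely by the hypothesis $\mathbf{e}_k\omega=0$ of (\ref{confcond}); hence $^{\omega}\widehat{L}_{bk}^a=\widehat{L}_{bk}^a$. Thus the only genuinely modified coefficients are the purely vertical ones $\widehat{C}_{bc}^a$, which transform as the Christoffel symbols of the $2$--dimensional v--metric $h_{ab}$ under the conformal factor $\omega^2$, i.e.\ by the standard terms $\delta_b^a e_c\ln\omega+\delta_c^a e_b\ln\omega-h_{bc}h^{ad}e_d\ln\omega$.

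With this in hand the two assertions follow by inspecting (\ref{dricci}). The component $\widehat{R}_{ai}=\widehat{R}_{\ aib}^b$ is assembled, once $\widehat{C}_{jb}^i=0$ and the hypothesis $\widehat{T}_{ja}^c=0$ are used, only from $\widehat{L}_{bk}^a$ and the unchanged horizontal and N--data, so $^{\omega}\widehat{R}_{ai}=\widehat{R}_{ai}$; and since for $\omega^2=1$ these are exactly the left--hand sides of the solved equations (\ref{4ep3a})--(\ref{4ep4a}), they vanish, giving $^{\omega}\widehat{R}_{ai}=\widehat{R}_{ai}=0$. The v--v Ricci $\widehat{R}_{ab}=\widehat{R}_{\ abc}^c$ is built solely from $\widehat{C}_{bc}^a$ and its $y$--derivatives, so it does change, and the change is the conformal deformation of the Ricci tensor of the $2$--dimensional metric $h_{ab}$. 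Because any $2$--dimensional Ricci tensor is proportional to its metric, both $\widehat{R}_{ab}$ and $^{\omega}\widehat{R}_{ab}$ are multiples of their respective metrics; raising one index with $^{\omega}h^{ac}=\omega^{-2}h^{ac}$ then yields a diagonal tensor, so $^{\omega}\widehat{R}_{\ b}^a=\widehat{R}_{\ b}^a+{}^{\omega}\widehat{Z}_{\ b}^a$ with ${}^{\omega}\widehat{Z}_{\ b}^a=\mathrm{diag}[\,{}^{\omega}\widehat{Z}_c(x^i,y^a)]$, as claimed.

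The main obstacle I anticipate is the bookkeeping in the $\widehat{R}_{ai}$ sector: one must verify carefully that, after the rescaling, every term that could carry the modified $\widehat{C}_{bc}^a$ through the mixed h--v curvature entering $\widehat{R}_{\ aib}^b$ is removed by the joint use of $\mathbf{e}_k\omega=0$, $\widehat{C}_{jb}^i=0$ and $\widehat{T}_{ja}^c=0$, so that no residual $\omega$--dependence survives. The v--v computation is comparatively safe, reducing to the classical $2$--dimensional conformal formula, but I would still check that the two diagonal entries ${}^{\omega}\widehat{Z}_4,{}^{\omega}\widehat{Z}_5$ are correctly produced by the genuine $y^5$--dependence that $\omega$ now carries.
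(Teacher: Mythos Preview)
Your proposal is correct and follows essentially the same route as the paper's proof: both treat the passage to nontrivial $\omega$ as a two--dimensional conformal rescaling of the vertical metric block, invoke $\mathbf{e}_{k}\omega=0$ together with $\widehat{T}_{ja}^{c}=0$ to control the mixed $\widehat{R}_{ai}$ sector, and then use the fact that the v--subspace is two--dimensional to conclude that the deformation ${}^{\omega}\widehat{Z}_{\ b}^{a}$ is diagonal. Your outline is in fact more explicit than the paper's (which simply appeals to ``straightforward computations''), in that you identify precisely which coefficients of (\ref{candcon}) are left invariant and isolate the single extra term $\delta_{b}^{a}\,\omega^{-1}\mathbf{e}_{k}\omega$ in $\widehat{L}_{bk}^{a}$ killed by (\ref{confcond}); the concern you flag about residual $\omega$--dependence in $\widehat{R}_{\ aib}^{b}$ via $\mathbf{e}_{i}\widehat{C}_{ab}^{c}$ is exactly the place where the paper's sketch also relies on the joint constraints, so your caution there is well placed but does not indicate a gap in strategy.
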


\begin{proof}
By straightforward computations for ansatz defined by metrics (\ref{genansc}%
), we see that the v--part containing coefficients $\omega ^{2}h_{a}$
results in certain two dimensional conformal transforms of $\ \widehat{R}%
_{ab}$ to $\ ^{\omega }\widehat{R}_{ab}$ (we can consider in this case any
fixed values $x^{i}$ but arbitrary coordinates $y^{4}$ and $y^{5})$ and
certain additional terms to $\ \widehat{R}_{ai}$ giving a nonzero $\
^{\omega }\widehat{R}_{ai}.$ Nevertheless, we can satisfy the equations (\ref%
{eqconf}) with $\widehat{T}_{ja}^{c}=0$ for any nontrivial factor $\omega $
for which $\mathbf{e}_{k}\omega =0.$ Of course, for such nontrivial $\
^{\omega }\widehat{Z}_{c}(x^{i},y^{a}),$ we should redefine the sources (\ref%
{source}), via frame/coordinate transform, which would allow us to solve
equations of type (\ref{4ep2a}), when $\widehat{R}_{4}^{4}=$ $\ _{4}^{\omega
}\Upsilon (x^{i},v)\ $\ and $\widehat{R}_{5}^{5}=$ $\ _{5}^{\omega }\Upsilon
(x^{i},v), $ with contributions to the vertical conformal transforms, are
equivalent to certain $\widehat{R}_{4}^{4}=\widehat{R}_{5}^{5}=\ -\Upsilon
_{2}(x^{i},v).$ For constraints of type $\widehat{T}_{ja}^{c}=0$ and $%
\mathbf{e}_{k}\omega =0,$ and dimensional vertical subspaces, one holds $%
\widehat{R}_{4}^{4}=\widehat{R}_{5}^{5}=\ ^{\omega }\widehat{R}_{4}^{4}=\
^{\omega }\widehat{R}_{5}^{5}.$
\end{proof}

Summarizing the results of Theorems \ref{th1} and \ref{th2}, Proposition \ref%
{pr1}, Corollary \ref{cor1} and Lemma \ref{lm1}, we prove the Main Result
stated in Theorem \ref{mth}.

As a matter of principle, any exact solution in gravity theories (Einstein
gravity and sting/ brane/ gauge/ Kaluza--Klein, Lagrange--Finsler,
supersymmetric and/or noncommutative generalizations etc) can be represented
in a form (\ref{ansgensol}) or certain nonholonomic frame transforms/
deformations with extra--dimension coordinates and various types of
commutative and noncommutative parameters, see more general/ alternative
constructions in Refs. \cite{vncg,vsgg,vsolhd,vnbh}. Perhaps, the anholonomic deformation
method allows us to construct general solutions of gravitational equations
in the form (\ref{einst1}), for arbitrary dimension and source (\ref{source}%
), when the Ricci tensor is determined by any generalized linear and
nonlinear connections\footnote{%
of course, the term ''general solution'' should be used in a quite
approximate form because it may be not clear how to define a ''general
unique solution'' in a rigorous mathematical form for some nonlinear systems
of equations with possible singularities of coefficients and/or generalized
topological and group symmetries etc}. The length of this paper does not
allow us to speculate on symmetries and properties of such solutions and
possible physical implications (for instance, how to consider black hole and
cosmological solutions with singularities and horizons, and their
nonholonomic deformations); for details and discussions, we send the reader
to Refs. \cite{ijgmmp,vncg,vsgg,vrflg}.

\vskip5pt

\textbf{Acknowledgement: } Author thanks M. Anastasiei for important discussions and support.

\vskip5pt

\textbf{Remarks on submissions and publications} This preprint version is almost identic to a published  letter variant (see: S. Vacaru, IJGMMP \textbf{8} (2011) 9-21; submitted to arXiv.org on September 22, 2009). It is also related to another already published article \cite{vsolhd} (further variants 2-4, extending the version v1, were put as   arXiv: 0909.3949 [gr-qc] beginning October 1, 2009) containing detailed proofs and generalizations of results on exact solutions of Einstein equations for (pseudo) Riemannian spaces of arbitrary finite dimension $n+m > 5$. Following a discussion and suggestion of arXiv's Moderator (from October 2009), we submitted two variants of electronic preprints because two different manuscripts were published in different journals (with different titles and lengths and rather different contents) and the letter variant may have certain priorities for readers interested in exact solutions in general relativity but  not in extra dimension generalizations. On June 20, 2011, moderators of arXiv.org decided  to provide a different number to the "short" variant of paper as a submission to physics.gen-ph.

\end{document}